\theoremstyle{plain}
\newtheorem{thm}{Theorem}
\theoremstyle{definition}
\theoremstyle{remark}
\definecolor{nblue}{rgb}{0.2,0.2,0.7}
\definecolor{ngreen}{rgb}{0.1,0.5,0.1}
\definecolor{nred}{rgb}{0.8,0.2,0.2}
\definecolor{nblack}{rgb}{0,0,0}
\providecommand{\bra}[1]{\langle #1 |}
\providecommand{\ket}[1]{|#1 \rangle}
\newcommand{\md}[1]{\mathds{#1}}
\newcommand{\mc}[1]{\mathcal{#1}}
\newcommand{\mbf}[1]{\mathbf{#1}}
\renewcommand{\>}{\rangle}
\newcommand{\<}{\langle}
\newcommand{\tr}{\mathrm{Tr}}
\newcommand{\tn}[1]{^{\otimes #1}}
\newcommand{\ct}{^{\dagger}}
\newcommand{\hidden}[1]{}
\crefname{equation}{equation}{equations}
\crefname{figure}{figure}{figures}
\crefname{thm}{theorem}{theorems}
\begin{document}

\title{Noise tailoring for scalable quantum computation via randomized 
compiling} 

\author{Joel J. \surname{Wallman}}
\affiliation{Institute for Quantum Computing and Department of Applied 
Mathematics, University of Waterloo, Waterloo, Canada}

\author{Joseph \surname{Emerson}}
\affiliation{Institute for Quantum Computing and Department of Applied 
Mathematics, University of Waterloo, Waterloo, Canada}
\affiliation{Canadian Institute for Advanced Research, Toronto, Ontario M5G 
1Z8, Canada}

\date{\today}

\begin{abstract}
Quantum computers are poised to radically outperform their classical 
counterparts by manipulating coherent quantum systems. A realistic quantum 
computer will experience errors due to the environment and imperfect 
control. When these errors are even partially coherent, they present a major 
obstacle to performing robust computations. Here, we propose a method for 
introducing independent random single-qubit gates into the logical circuit in 
such a way that the effective logical circuit remains unchanged. We prove that 
this randomization tailors the noise into stochastic Pauli errors, which can 
dramatically reduce error rates while introducing little or no experimental 
overhead. Moreover, we prove that our technique is robust to the inevitable 
variation in errors over the randomizing gates and numerically illustrate the 
dramatic reductions in worst-case error that are achievable. Given such 
tailored noise, gates with significantly lower fidelity---comparable to 
fidelities realized in current experiments---are sufficient to achieve 
fault-tolerant quantum computation. Furthermore, the worst-case error rate of 
the tailored noise can be directly and efficiently measured through randomized 
benchmarking protocols, enabling a rigorous certification of the performance of 
a quantum computer. 
\end{abstract}

\maketitle

The rich complexity of quantum states and processes enables powerful new 
protocols for processing and communicating quantum information, as illustrated 
by Shor's factoring algorithm~\cite{Shor1999} and quantum simulation 
algorithms~\cite{Lloyd1996}. However, the same rich complexity of quantum 
processes that makes them useful also allows a large variety of errors to 
occur. Errors in a quantum computer arise from a variety of sources, including 
decoherence and imperfect control, where the latter generally lead to coherent (unitary) errors. 
It is provably possible to perform a fault-tolerant quantum computation in the 
presence of such errors provided they occur with at most some maximum threshold 
probablity~\cite{Shor1995,Gottesman1996,Knill1998,Aharonov1999,Calderbank2002,Kitaev2003}.
However, the fault-tolerant threshold probability depends upon the error-correcting code and is 
notoriously difficult to estimate or bound because of the sheer variety of possible 
errors. Rigorous lower bounds on the threshold of the order of $10^{-6}$~\cite{Aharonov1999} 
for generic local noise and $10^{-4}$~\cite{Aliferis2007a} and 
$10^{-3}$~\cite{Aliferis2008} for stochastic Pauli noise have been obtained for 
a variety of codes. While these bounds are rigorous, they are far below 
numerical estimates that range from $10^{-2}$~\cite{Knill2005,Wang2011} and 
$10^{-1}$~\cite{Duclos-Cianci2010,Wootton2012,Bombin2012}, which are generally 
obtained assuming stochastic Pauli noise, largely because the 
effect of other errors is too difficult to simulate~\cite{Puzzuoli2014}. While 
a threshold for Pauli errors implies a threshold exists for arbitrary errors 
(e.g., unitary errors), there is currently no known way to rigorously estimate 
a threshold for general noise from a threshold for Pauli noise. 

The ``error rate'' due to an arbitrary noise map $\mc{E}$ can be quantified in 
a variety of ways. Two particularly important quantities are the average error 
rate defined via the gate fidelity
\begin{align}
r(\mc{E}) = 1- \int {\rm d}\psi 
\bra{\psi}\mc{E}(\ket{\psi}\!\bra{\psi})\ket{\psi}
\end{align}
and the worst-case error rate (also known as the diamond distance from the 
identity)~\cite{Kitaev1997}
\begin{align}\label{eq:diamond_def}
\epsilon(\mc{E}) 
= \tfrac{1}{2}\| \mc{E} - \mc{I}\|_{\diamond} 
= \sup_{\psi} \tfrac{1}{2}\| 
\left(\mc{E}\otimes\mc{I}_d-\mc{I}_{d^2}\right)(\psi)\|_1
\end{align}
where $d$ is the dimension of the system $\mc{E}$ acts on, $\|A\|_1 = 
\sqrt{\tr A\ct A}$ and the maximization is over all $d^2$-dimensional pure 
states (to account for the error introduced when acting on entangled states). 
The average error rate $r(\mc{E})$ is an experimentally-convenient 
characterization of the error rate because it can be efficiently estimated via 
randomized benchmarking~\cite{Emerson2005, Emerson2007, Dankert2009, Knill2008, 
Magesan2011}. However, the diamond distance is typically the quantity used to 
prove rigorous fault-tolerance thresholds~\cite{Aharonov1999}. The average 
error rate and the worst-case error rate are related via the 
bounds~\cite{Beigi2011, Wallman2014}
\begin{align}\label{eq:fidelity_to_worst}
r(\mc{E}) d^{-1}(d+1)
\leq \epsilon(\mc{E}) 
\leq \sqrt{r(\mc{E})} \sqrt{d(d+1)}.
\end{align}
The lower bound is saturated by any stochastic Pauli noise, in which case the 
worst-case error rate is effectively equivalent to the experimental estimates 
obtained efficiently via randomized benchmarking~\cite{Magesan2012a}. While the 
upper bound is not known to be tight, there exist unitary channels such that 
$\epsilon(\mc{E}) \approx \sqrt{(d+1)r(\mc{E})/4}$, so the scaling with $r$ 
is optimal~\cite{Sanders2015}. 

The scaling of the upper bound of \cref{eq:fidelity_to_worst} is only saturated 
by purely unitary noise. However, even a small coherent error relative to 
stochastic errors can result in a dramatic increase in the worst-case error.  
For example, consider a single qubit noise channel with $ r = 1\times 10^{-4}$, 
where the contribution due to stochastic noise processes is $r = 0.83\times 
10^{-4}$ and the remaining contribution is from a small unitary (coherent) 
rotation error. The worst-case error for such noise is $\epsilon \approx 
10^{-2}$, essentially two orders of magnitude greater than the 
infidelity~\cite{Sanders2015}.

Here we show that by compiling random single-qubit gates into a logical 
circuit, noise with arbitrary coherence and spatial correlations can be 
tailored into stochastic Pauli noise. We also prove that our technique is 
robust to gate-dependent errors which arise naturally due to imperfect 
gate calibration. In particular, our protocol is fully robust against arbitrary 
gate-dependent errors on the gates that are most difficult to implement, while 
imperfections in the easier gates introduces an additional error that is 
essentially proportional to the infidelity.

Our randomized compiling technique requires only a small (classical) overhead 
in the compilation cost, or, alternatively,  can be implemented on the fly 
with fast classical control. Stochastic Pauli errors with the same average 
error rate $r$ as a coherent error leads to four major advantages for quantum 
computation: (i) they have a substantially lower worst-case error rate, (ii) 
the worst-case error rate can be directly estimated efficiently and robustly 
via randomized benchmarking experiments, enabling a direct comparison to a 
threshold estimate to determine if fault-tolerant quantum computation is 
possible, (iii) the known fault-tolerant thresholds for Pauli errors are 
substantially higher than for coherent errors, and (iv) the average error rate 
accumulates linearly with the length of a computation for stochastic Pauli 
errors, whereas it can accumulate quadratically for coherent errors.

Randomizing quantum circuits has been previously proposed in 
Refs~\cite{Knill2004a,Kern2005}. However, these proposals have specific 
limitations that our technique circumvents. The proposal for inserting Pauli 
gates before and after Clifford gates proposed in Ref.~\cite{Knill2004a} is a 
special case of our technique when the only gates to be implemented are 
Clifford gates. However, this technique does not account for non-Clifford gates 
whereas our generalized technique does. As a large number of non-Clifford gates 
are required to perform useful quantum computations~\cite{Aaronson2004} and are 
often more difficult to perform fault-tolerantly, our generalized technique 
should be extremely valuable in practice. Moreover, the proposal in 
Ref.~\cite{Knill2004a} assumes that the Pauli gates are essentially perfect, 
whereas we prove that our technique is robust to imperfections in the Pauli 
gates. Alternatively, Pauli-Random-Error-Correction (PAREC) has been shown to 
eliminate static coherent errors~\cite{Kern2005}. However, PAREC involves 
changing the multi-qubit gates in each step of the computation. As multi-qubit 
errors are currently the dominant error source in most experimental platforms 
and typically depend strongly on the gate to be performed, it is unclear how 
robust PAREC will be against gate-dependent errors on multi-qubit gates and 
consequently against realistic noise. By way of contrast, our technique is 
completely robust against arbitrary gate-dependent errors on multi-qubit gates.

\section{Results}

\subsection{Standardized form for compiled quantum circuits}

We begin by proposing a standardized form for compiled quantum circuits based 
on an operational distinction between `easy' and `hard' gates, that 
is, gates that can be implemented in a given experimental platform with 
relatively small and large amounts of noise respectively. We also propose a 
specific choice of easy and `hard' gates that is well-suited to many 
architectures for fault-tolerant quantum computation.

In order to experimentally implement a quantum algorithm, a quantum circuit is 
compiled into a sequence of elementary gates that can be directly implemented 
or have been specifically optimized. Typically, these elementary gates can be 
divided into easy and hard gate sets based either on how many physical qubits 
they act on or how they are implemented within a fault-tolerant architecture. 
In near-term applications of universal quantum computation without quantum 
error correction, such as quantum simulation, the physical error model and 
error rate associated with multi-qubit gates will generally be distinct from, 
and much worse than, those associated with single qubit gates. In the 
long-term, fault-tolerant quantum computers will implement some operations 
either transversally (that is, by applying independent operations to a set of 
physical qubits) or locally in order to prevent errors from cascading. However, 
recent `no-go' theorems establish that for any fault-tolerant scheme, there 
exist some operations cannot be performed in such a 
manner~\cite{Eastin2009,Beverland2014} and so must be implemented via 
other means, such as magic-state injection~\cite{Bravyi2005} or gauge 
fixing~\cite{Bombin2015}. 

The canonical division that we consider is to set the easy gates to be the 
group generated by Pauli gates and the phase gate $R = |0\>\!\< 0| + i|1\>\!\< 
1|$, and the hard gate set to be the Hardamard gate $H$, the $\pi/8$ gate 
$\sqrt{R}$ and the two-qubit controlled-Z gate $\Delta(Z) = |0\>\!\<0|\otimes I 
+ |0\>\!\<0|\otimes Z$. Such circuits are universal for quantum computation and 
naturally suit many fault-tolerant settings, including CSS codes with a 
transversal $T$ gate (such as the 15-qubit Reed-Muller code), color codes and 
the surface code. While some of the `hard' gates may be easier than others in a 
given implementation, it is beneficial to make the set of easy gates as small 
as possible since our scheme is completely robust to arbitrary variations in 
errors over the hard gates.

With such a division of the gates, the target circuit can be reorganized into a 
circuit (the `bare' circuit) consisting of $K$ clock cycles, wherein each cycle 
consists of a round of easy gates followed by a round of hard gates applied 
to disjoint qubits as in \cref{fig:randomized_compilation}~a. To 
concisely represent the composite operations performed in individual rounds, we 
use the notational shorthand $\vec{A} = A_1\otimes \ldots \otimes A_n$ and 
define $G_k$ to be the product of all the hard gates applied in the $k$th 
cycle. We also set $G_K=I$ without loss of generality, so that the circuit 
begins and ends with a round of easy gates. 

\begin{figure}	
	\includegraphics[width=0.95\linewidth]{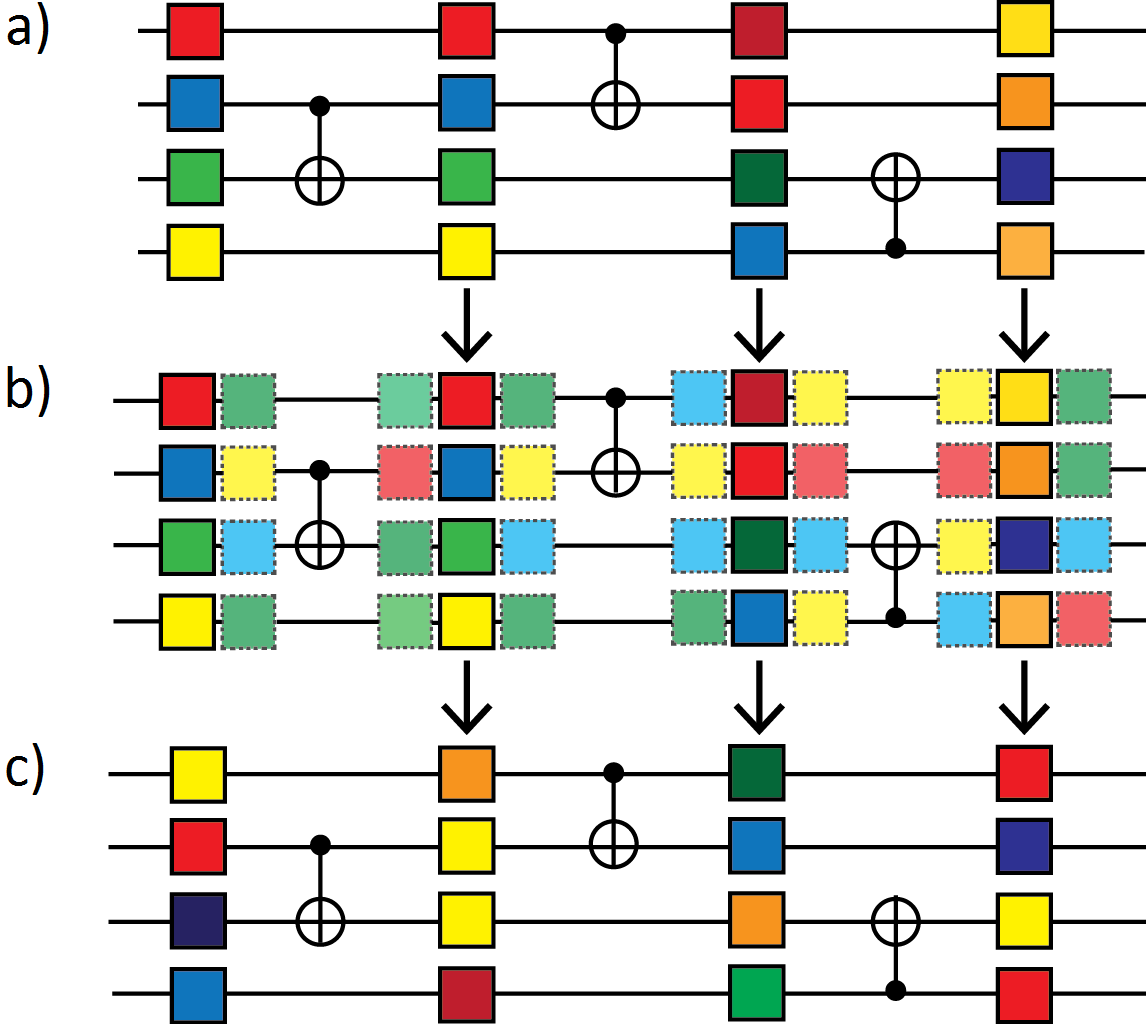}
\caption{a) Example of a bare circuit that is arranged into cycles wherein each 
cycle consists of a round of easy single-qubit gates and a round of hard gates 
(here, the hard gates are controlled-NOT gates). b) A randomized circuit 
wherein twirling gates have been inserted before and after every easy gate. c) 
A randomized circuit wherein the twirling gates have been compiled into the 
easy gates, resulting in a new circuit that is logically equivalent to the bare 
circuit and has the same number of elementary 
gates.}\label{fig:randomized_compilation}
\end{figure}

\subsection{Randomized compiling}

We now specify how standardized circuits in the above form can be randomized in 
order to average errors in the implementations of the elementary gates into an 
effective stochastic channel, that is, into a channel $\mc{E}$ that maps any 
$n$-qudit state $\rho$ to
\begin{align}
\mc{E}(\rho) = \sum_{P\in \mbf{P}_d\tn{n}} c_P P\rho P\ct,
\end{align}
where $\mbf{P}_d\tn{n}$ is the set of $d^{2n}$ generalized Pauli operators and 
the coefficients $c_P$ are a probability distribution over $\mbf{P}_d\tn{n}$. 
For qubits ($d=2$), $\mbf{P}_2$ is the familiar set of four Hermitian and 
unitary Pauli operators $\{I,X,Y,Z\}$.

Let $\mbf{C}$ denote the group generated by the easy gates and assume that it 
contains a subset $\mbf{T}$ such that
\begin{align}\label{eq:reasy}
\mc{E}^{\mbf{T}} = \md{E}_T T\ct \mc{E} T
\end{align}
is a stochastic channel for any channel $\mc{E}$, where $\md{E}_x f(x) = 
|X|^{-1}\sum_{x\in X} f(x)$ denotes the uniform average over a set $X$ 
(typically a gate set implicit from the context). The canonical example of such 
a set is $\mbf{P}_d\tn{n}$ or any group containing $\mbf{P}_d\tn{n}$. 

We propose the following randomized compiling technique, where the 
randomization should ideally be performed independently for each run of a given 
bare circuit. Each round of easy gates $\vec{C}_k$ in the bare circuit of 
\cref{fig:randomized_compilation}~a is replaced with a round of randomized 
dressed gates
\begin{align}\label{eq:dressed}
\tilde{C}_k = \vec{T}_k \vec{C}_k \vec{T}_{k-1}^c
\end{align}
as in \cref{fig:randomized_compilation}~b, where the $T_{j,k}$ are chosen 
uniformly at random from the twirling set $\mbf{T}$ and the correction 
operators are set to $\vec{T}_k^c = G_k\vec{T}_k\ct G_k\ct$ to undo the 
randomization from the previous round. The edge terms $\vec{T}^c_0$ 
and $\vec{T}_K$ can either be set to the identity or also randomized depending 
on the choice of the twirling set and the states and measurements.

The dressed gates should then be compiled into and implemented as a single 
round of elementary gates as in \cref{fig:randomized_compilation}~c rather than 
being implemented as three separate rounds of elementary gates. In order to 
allow the dressed gates to be compiled into a single easy gate, we require 
$\vec{T}_k^c\in\mbf{C}\tn{n}$ for all $\vec{T}_k\in\mbf{T}\tn{n}$. The 
example with $\mbf{T} = \mbf{P}_d$ that has been implicitly appealed to and 
described as ``toggling the Pauli frame'' previously~\cite{Knill2005} is 
a special case of the above technique when the hard gates are Clifford gates 
(which are defined to be the gates that map Pauli operators to Pauli operators 
under conjugation), but breaks down when the hard gates include non-Clifford 
gates such as the single-qubit $\pi/8$ gate. For the canonical division into 
easy and hard gates from the previous section, we set $\mbf{T} = \mbf{P}_2$, 
$\mbf{C}$ to be the group generated by $R$ and $\mbf{P}_2$ (which is isomorphic 
to the dihedral group of order 8) and the hard gates to be rounds of $H$, 
$\sqrt{R}$ and $\Delta(Z)$ gates. Conjugating a Pauli gate by $H$ or 
$\Delta(Z)$ maps it to another Pauli gate, while conjugating by $\sqrt{R}$ maps 
$X^x Z^z$ to $R^x X^x Z^z$ (up to a global phase). Therefore the correction 
gates, and hence the dressed gates, are all elements of the easy gate set.

The tailored noise is not realized in any individual choice of sequences. 
Rather, it is the average over independent random sequences. However, while 
each term in the tailored noise can have a different effect on an input state 
$\rho$, if the twirling gates are independently chosen on each run, then 
the expected noise over multiple runs is exactly the tailored noise. 
Independently sampling the twirling gates each time the circuit is run 
introduces some additional overhead, since the dressed gates (which are 
physically implemented) depend on the twirling gates and so need to be 
recompiled for each experimental run of a logical circuit. However, this 
recompilation can be performed in advance efficiently on a classical computer 
or else applied efficiently `on the fly' with fast classical control. Moreover, 
this fast classical control is exactly equivalent to the control required in 
quantum error correction so imposes no additional experimental burden.

We will prove below that our technique tailors noise satisfying various 
technical assumptions into stochastic Pauli noise. We expect the technique will 
also tailor more general noise into approximately stochastic noise, 
though leave a fully general proof as an open problem. 

\subsection{Robustness to arbitrary independent errors on the hard gates}

We now prove that our randomized compiling scheme results in an average 
stochastic noise channel for Markovian noise that depends arbitrarily upon the 
hard gates but is independent of the easy gate. Under this assumption, the 
noisy implementations of the $k$th round of easy gates $\vec{C}_k$ and hard 
gates $G_k$ can be written as $\mc{E}_{\rm e} \vec{C}$ and $G_k\mc{E}(G_k)$ 
respectively, where $\mc{E}_{\rm e}$ and $\mc{E}(G_k)$ are $n$-qubit channels 
that can include cross-talk and multi-qubit correlations and $\mc{E}(*)$ can 
depend arbitrarily on the argument, that is, on which hard gates are 
implemented.

\begin{thm}\label{thm:gate_independent}
Randomly sampling the twirling gates $\vec{T}_k$ independently in each round 
tailors the noise at each time step (except the last) into stochastic Pauli 
noise when the noise on the easy gates is gate-independent.
\end{thm}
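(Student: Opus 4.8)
The plan is to write the entire noisy randomized circuit as a product of superoperators, use the algebraic form of the correction gates $\vec{T}_k^c = G_k\vec{T}_k\ct G_k\ct$ to collapse that product into a product of local twirls, and then invoke the defining property \cref{eq:reasy} of the twirling set on each one. Write $\mc{C}_k,\mc{G}_k,\mc{T}_k,\mc{T}_k^c$ for the channels that conjugate by $\vec{C}_k,G_k,\vec{T}_k,\vec{T}_k^c$, so that the dressed round is $\tilde{\mc{C}}_k=\mc{T}_k\mc{C}_k\mc{T}_{k-1}^c$ with $\mc{T}_{k-1}^c=\mc{G}_{k-1}\mc{T}_{k-1}\ct\mc{G}_{k-1}\ct$. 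Under the gate-independence hypothesis the noisy $k$th dressed easy round is $\mc{E}_{\rm e}\tilde{\mc{C}}_k$ and the noisy $k$th hard round is $\mc{G}_k\mc{E}(G_k)$, so the implemented (still random) channel is
\begin{align}
\mc{N}_{\vec{T}} = \prod_{k=K}^{1}\mc{G}_k\,\mc{E}(G_k)\,\mc{E}_{\rm e}\,\mc{T}_k\,\mc{C}_k\,\mc{T}_{k-1}^c,
\end{align}
ordered with $k=K$ on the left, up to the edge terms $\mc{T}_0^c,\mc{T}_K$ and the state preparation and measurement, which I would treat separately.

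Next I would substitute $\mc{T}_{k-1}^c=\mc{G}_{k-1}\mc{T}_{k-1}\ct\mc{G}_{k-1}\ct$ and observe the telescoping cancellation: the factor $\mc{G}_{k-1}\ct$ ending the $k$th block meets the factor $\mc{G}_{k-1}$ starting the $(k-1)$th block, and $\mc{G}_{k-1}\ct\mc{G}_{k-1}=\mc{I}$. After all such cancellations, for each $1\le j\le K-1$ the surviving structure is the local pattern $\mc{G}_j\,\mc{T}_j\ct\,\bigl(\mc{E}(G_j)\mc{E}_{\rm e}\bigr)\,\mc{T}_j$ sitting between the ideal gates $\mc{C}_{j+1}$ on its left and $\mc{C}_j$ on its right, while the last cycle leaves only $\mc{E}(G_K)\mc{E}_{\rm e}\mc{T}_K$ with no matching $\mc{T}_K\ct$ (recall $G_K=I$, so $\mc{G}_K=\mc{I}$). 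The key point is that each random gate $\vec{T}_j$ now appears exactly twice, as $\mc{T}_j$ and $\mc{T}_j\ct$ flanking the channel $\mc{E}(G_j)\mc{E}_{\rm e}$, which is \emph{fixed}: $\mc{E}_{\rm e}$ is gate-independent by hypothesis, and $\mc{E}(G_j)$ depends only on the fixed hard gate $G_j$, not on the randomization.

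Finally, since the $\vec{T}_k$ are sampled independently in each round, $\md{E}[\mc{N}_{\vec{T}}]$ factorizes over the blocks, with $\md{E}_{\vec{T}_j}$ acting only on the $j$th local pattern, yielding $\mc{G}_j\,\md{E}_{\vec{T}_j}\!\bigl[\mc{T}_j\ct\mc{E}(G_j)\mc{E}_{\rm e}\mc{T}_j\bigr]\mc{C}_j = \mc{G}_j\,\bigl(\mc{E}(G_j)\mc{E}_{\rm e}\bigr)^{\mbf{T}}\mc{C}_j$ for $1\le j\le K-1$. By \cref{eq:reasy} every $\bigl(\mc{E}(G_j)\mc{E}_{\rm e}\bigr)^{\mbf{T}}$ is a stochastic Pauli channel, so the averaged circuit is exactly the bare circuit with a stochastic Pauli channel inserted at every cycle except the $K$th, whose noise $\mc{E}(G_K)\mc{E}_{\rm e}$ is only conjugated---not twirled---by $\vec{T}_K$, giving the stated exception; the edge terms then merely control whether $\mc{T}_0^c$ and $\mc{T}_K$ are set to the identity or absorbed into an initial/final twirl dictated by the state preparation and measurement. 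I expect the only real difficulty to be the bookkeeping in the cancellation step: one must keep the easy-gate noise $\mc{E}_{\rm e}$ and the hard-gate noise $\mc{E}(G_j)$ on the correct sides so that both land inside the \emph{same} twirl, and one must be careful that it is precisely the gate-independence hypothesis that lets $\md{E}_{\vec{T}_j}$ be pushed through to act only on $\mc{T}_j\ct(\cdot)\mc{T}_j$.
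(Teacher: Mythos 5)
Your proposal is correct and takes essentially the same route as the paper: the paper's proof is a terse, figure-based version of exactly your telescoping argument --- the correction gates $\vec{T}_{k-1}^c=G_{k-1}\vec{T}_{k-1}\ct G_{k-1}\ct$ cancel against the ideal hard gates so that each independent $\vec{T}_k$ flanks the fixed composite noise $\mc{E}(G_k)\mc{E}_{\rm e}$, and averaging yields the twirl in \cref{eq:rhard}. The only gloss is at the final step: \cref{eq:reasy} by itself guarantees only a \emph{stochastic} channel, and the paper upgrades this to a stochastic \emph{Pauli} channel by noting that the twirl over any unitary 1-design $\mbf{T}$ coincides with the Pauli twirl.
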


\begin{proof}
The key observation is that if the noise in rounds of easy gates is some fixed 
noise channel $\mc{E}_{\rm e}$, then the dressed gates in 
\cref{eq:dressed} have the same noise as the bare gates and so 
compiling in the extra twirling gates in \cref{fig:noise_compilation}~c 
does not change the noise at each time step, as illustrated in 
\cref{fig:noise_compilation}~a and d. Furthermore, the correction gates 
$T^c_{k,j}$ are chosen so that they are the inverse of the randomizing gates 
when they are commuted through the hard gates, as illustrated in 
\cref{fig:noise_compilation}~b and c. Consequently, uniformly averaging 
over the twirling gates in every cycle reduces the noise in the $k$th cycle to 
the tailored noise
\begin{align}\label{eq:rhard}
\mc{T}_k =\md{E}_{\vec{T}} \vec{T}\ct\mc{E}(G_k) \mc{E}_{\rm e}\vec{T}
\end{align}
where for channels $\mc{A}$ and $\mc{B}$, $\mc{A}\mc{B}$ denotes the channel 
whose action on a matrix $M$ is $\mc{A}(\mc{B}(M))$. When $\mbf{T}=\mbf{P}$, 
the above channel is a Pauli channel~\cite{Emerson2007}. Moreover, by the 
definition of a unitary 1-design~\cite{Dankert2009}, the above sum is 
independent of the choice of $\mbf{T}$ and so is a Pauli channel for any 
unitary 1-design.
\end{proof}

\Cref{thm:gate_independent} establishes that the noise in all but the final 
cycle can be 
exactly tailored into stochastic noise (albeit under somewhat idealized 
conditions which will be relaxed below). To account for noise in the final 
round, we can write the effect corresponding to a measurement outcome 
$|\vec{z}\>$ as $\mc{A}(|\vec{z}\>\!\<\vec{z}|)$ for some fixed noise 
map $\mc{A}$. If $\mbf{P}\subset\mbf{C}$, we can choose $\vec{T}_K$ uniformly 
at random from $\mbf{P}\tn{n}$. A virtual Pauli gate can then be inserted 
between the noise map $\mc{A}$ and the idealized measurement effect 
$|\vec{z}\>\<\vec{z}|$ by classically relabeling the measurement outcomes to 
map $\vec{z}\to \vec{z}\oplus \vec{x}$, where $\oplus$ denotes entry-wise 
addition modulo two. Averaging over $\vec{T}_K$ with this relabeling reduces 
the noise in the final round of single-qubit Clifford gates and the measurement 
to 
\begin{align}
\overline{\mc{A}} &= \md{E}_{\vec{P}} \vec{P} \mc{A} \mc{E}_e \vec{P}
\end{align}
This technique can also be applied to quantum non-demolition measurements on a 
subset of the qubits (as in, for example, error-correcting circuits), where the 
unmeasured qubits have randomizing twirling gates applied before and after the 
measurement.

\begin{figure}	
	\includegraphics[width=0.95\linewidth]{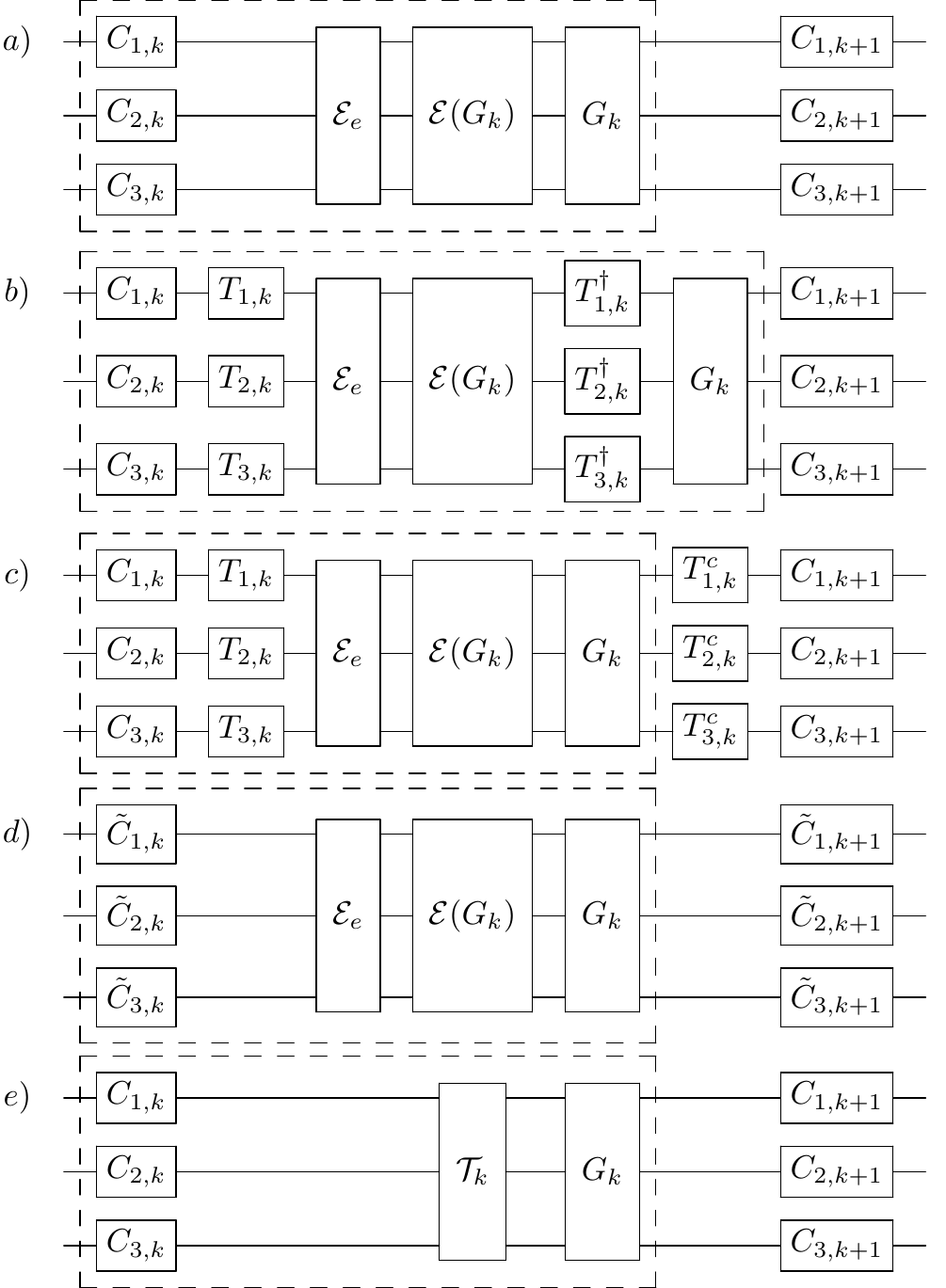}
	\caption{a) Fragment of a noisy bare circuit with the $k$th cycle indicated 
		by the dashed box. b) To tailor the noise into stochastic noise, we 
		insert random twirling gates before the noise and the corresponding 	
		correction gates immediately after the noise. c) Equivalent circuit 
		to b, where the correction gates have been commuted through $G_k$, the 
		round of hard gates. d) The randomized circuit equivalent to b and c, 
		where the twirling and correction gates have been compiled into the 
		adjacent easy gates. e) The tailored circuit obtained by averaging over 
		randomized circuits where $\mc{T}_k$ is the tailored stochastic Pauli 
		channel in \cref{eq:rhard}.}\label{fig:noise_compilation}
\end{figure}

\subsection{Robustness to independent errors on the easy gates}

By \cref{thm:gate_independent}, our technique is fully robust to the most 
important form of gate-dependence, namely, gate-dependent errors on the 
hard gates. However, \cref{thm:gate_independent} still requires that the 
noise on the easy gates is effectively gate-independent. Because residual 
control errors in the easy gates will generally produce small gate-dependent 
(coherent) errors, we will show that the benefits of noise tailoring can still 
be achieved in this physically realistic setting. 

When the noise depends on the easy gates, 
the tailored noise in the $k$th cycle from equation~\eqref{eq:rhard} becomes
\begin{align}
	\mc{T}_k^{\rm GD} =\md{E}_{\vec{T}_1,\ldots,\vec{T}_k} 
	\vec{T}_k\ct\mc{E}(G_k) \mc{E}(\vec{\tilde{C}}_k)\vec{T}_k,
\end{align}
which depends on the previous twirling gates through $\vec{\tilde{C}}_k$ by 
\cref{eq:dressed}. This dependence means that we cannot assign independent 
noise to each cycle in the tailored circuit. 

However, in \cref{thm:gate_dependent} (proven in `Methods') we show 
that implementing a circuit with gate-dependent noise 
$\mc{E}(\vec{\tilde{C}}_k)$ instead of the corresponding gate-independent noise
\begin{align}
\mc{E}_k^{\mbf{T}} = \md{E}_{\vec{\tilde{C}}_k} \mc{E}(\vec{\tilde{C}}_k)
\end{align}
introduces a relatively small additional error. We show that the additional 
error is especially small when $\mbf{T}$ is a group normalized by $\mbf{C}$, 
that is,  $CTC\ct\in\mbf{T}$ for all $C\in\mbf{C}$, $T\in\mbf{T}$. This 
condition is satisfied in many practical cases, including the scenario where 
$\mbf{T}$ is the Pauli group and $\mbf{C}$ is the group generated by Pauli and 
$R$ gates. The stronger bound reduces the contributions from every cycle by 
orders of magnitude in parameter regimes of interest (i.e., 
$\epsilon\left[\mc{E}(G_{k-1})\mc{E}_{k-1}^{\mbf{T}}\right] \leq 10^{-2}$, 
comparable to current experiments), so that the bound on the additional error 
grows very slowly with the circuit length.

\begin{thm}\label{thm:gate_dependent}
Let $\mc{C}_{\rm GD}$ and $\mc{C}_{\rm GI}$ be tailored circuits with 
gate-dependent and gate-independent noise on the easy gates respectively. Then
\begin{align}
\|\mc{C}_{\rm GD}-\mc{C}_{\rm GI}\|_\diamond 
\leq \sum_{k=1}^K \md{E}_{\vec{T}_1,\ldots,\vec{T}_K}  
\|\mc{E}(\vec{\tilde{C}}_k)-\mc{E}_k^{\mbf{T}}\|_\diamond.
\end{align}
When $\mbf{T}$ is a group normalized by $\mbf{C}$, this can be improved to
\begin{align}
\|\mc{C}_{\rm GD},\mc{C}_{\rm GI}\|_\diamond 
&\leq \sum_{k=2}^K 2\md{E}_{\vec{\tilde{C}}_k}\| \mc{E}(\vec{\tilde{C}}_k) - 
\mc{E}_k^{\mbf{T}}\|\epsilon\left[\mc{E}(G_{k-1})\mc{E}_{k-1}^{\mbf{T}}\right]  
\notag\\
&\quad + \md{E}_{\vec{\tilde{C}}_1}\|\mc{E}(\vec{\tilde{C}}_1) - 
\mc{E}_1^{\mbf{T}}\|_\diamond.
\end{align}
\end{thm}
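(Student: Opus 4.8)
The plan is to bound the diamond-norm difference between the two tailored circuits by telescoping over cycles, replacing $\mc{E}(\vec{\tilde{C}}_k)$ by $\mc{E}_k^{\mbf{T}}$ one cycle at a time and using the triangle inequality together with the fact that the diamond norm is submultiplicative under composition and nonincreasing under tensoring with ancillas. Writing $\mc{C}_{\rm GD}$ and $\mc{C}_{\rm GI}$ as ordered compositions of the per-cycle maps $\mc{E}(G_k)\mc{E}(\vec{\tilde{C}}_k)$ and $\mc{E}(G_k)\mc{E}_k^{\mbf{T}}$ respectively, a standard hybrid argument (interpolating between the two circuits by switching cycles from ``GD'' to ``GI'' from one end to the other) gives, for each fixed choice of twirling gates $\vec{T}_1,\ldots,\vec{T}_K$,
\begin{align}
\|\mc{C}_{\rm GD}-\mc{C}_{\rm GI}\|_\diamond
\leq \sum_{k=1}^K \|\mc{E}(\vec{\tilde{C}}_k)-\mc{E}_k^{\mbf{T}}\|_\diamond,
\end{align}
since all the other factors ($\mc{E}(G_k)$ and the remaining per-cycle maps) are channels and hence diamond-norm contractions. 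Taking the expectation over $\vec{T}_1,\ldots,\vec{T}_K$ and using convexity of the norm (Jensen) yields the first claimed bound; the only subtlety is bookkeeping the dependence of $\vec{\tilde{C}}_k$ on $\vec{T}_{k-1}$ via \cref{eq:dressed}, which is why the average is over all $\vec{T}_j$ rather than a single one.

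For the improved bound when $\mbf{T}$ is a group normalized by $\mbf{C}$, the key new ingredient is that the error $\delta\mc{E}_k = \mc{E}(\vec{\tilde{C}}_k)-\mc{E}_k^{\mbf{T}}$, after being twirled in the tailored circuit, is traceless in a suitable sense: uniformly averaging $\vec{T}_k\ct\,\delta\mc{E}_k\,\vec{T}_k$ over $\vec{T}_k\in\mbf{T}\tn n$ projects onto the Pauli-diagonal part, and because $\mbf{T}$ is normalized by $\mbf{C}$ the dressed gate $\vec{\tilde{C}}_k$ can be rewritten so that the twirl ``sees'' $\delta\mc{E}_k$ sandwiched between Pauli operators whose average of $\delta\mc{E}_k$ against the identity channel vanishes. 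Concretely, $\mc{E}_k^{\mbf{T}}$ is defined precisely as the average of $\mc{E}(\vec{\tilde{C}}_k)$ over the dressed gates, so $\md{E}\,\delta\mc{E}_k=0$; the normalizer condition ensures this averaging commutes with the surrounding twirl and hard-gate conjugations. One then writes the difference of the two circuits not as a sum of ``raw'' perturbations but as a sum of terms in which $\delta\mc{E}_k$ appears multiplied by the deviation of the adjacent cycle's channel $\mc{E}(G_{k-1})\mc{E}_{k-1}^{\mbf{T}}$ from the identity — this is where the factor $\epsilon[\mc{E}(G_{k-1})\mc{E}_{k-1}^{\mbf{T}}]$ enters. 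Since $\delta\mc{E}_k$ averages to zero, contracting it against $\mc{I}$ gives nothing, and only the ``error part'' of the neighbouring cycle survives, which is of size at most $2\epsilon[\mc{E}(G_{k-1})\mc{E}_{k-1}^{\mbf{T}}]$ in operator norm; the leftover $k=1$ term has no earlier neighbour and so retains the full diamond-norm bound.

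I would carry this out in the following order: (i) set up the hybrid/telescoping decomposition of $\mc{C}_{\rm GD}-\mc{C}_{\rm GI}$ and establish the first inequality; (ii) record the vanishing first moment $\md{E}_{\vec{\tilde{C}}_k}\delta\mc{E}_k = 0$ and the normalizer identity that lets the twirl be moved past $G_{k-1}$ and the adjacent dressed gate; (iii) rewrite each hybrid increment so that $\delta\mc{E}_k$ multiplies $\bigl(\mc{E}(G_{k-1})\mc{E}_{k-1}^{\mbf{T}} - \mc{I}\bigr)$ and bound the result using submultiplicativity of the relevant norms together with the identity $\tfrac12\|\cdot\|_\diamond = \epsilon(\cdot)$ from \cref{eq:diamond_def}; (iv) collect terms, separating out the $k=1$ boundary contribution. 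The main obstacle I anticipate is step (iii): making the ``cancellation against the identity'' rigorous requires being careful about which norm ($\diamond$ versus the induced trace norm $\|\cdot\|$) is used at each stage and about the fact that $\delta\mc{E}_k$ is not itself a channel (it is a difference of channels, hence trace-annihilating but not positive), so the usual contraction properties must be invoked for the channel factors only, with $\delta\mc{E}_k$ handled via its norm and its zero mean. Getting the constant $2$ and the precise placement of the diamond norm versus the plain norm in the two summands is the delicate accounting that the full proof in `Methods' must nail down.
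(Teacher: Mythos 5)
Your overall strategy coincides with the paper's: a telescoping (hybrid) decomposition plus submultiplicativity and normalization of the diamond norm gives the first bound, and the improved bound is obtained by performing the average over $\vec{T}_{k-1}$ inside the diamond norm, splitting $\mc{E}(G_{k-1})\mc{E}_{k-1}^{\mbf{T}} = \bigl(\mc{E}(G_{k-1})\mc{E}_{k-1}^{\mbf{T}}-\mc{I}\bigr)+\mc{I}$, and showing that the $\mc{I}$ term is annihilated by a vanishing first moment of $\delta_k = \mc{E}(\vec{\tilde{C}}_k)-\mc{E}_k^{\mbf{T}}$, leaving a residual term of size $\|\delta_k\|_\diamond\cdot 2\epsilon[\mc{E}(G_{k-1})\mc{E}_{k-1}^{\mbf{T}}]$ plus an unimproved $k=1$ boundary term.

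There is, however, one genuine imprecision at the heart of your second bound. The fact you record in step (ii), namely $\md{E}_{\vec{\tilde{C}}_k}\delta_k=0$, is true by definition of $\mc{E}_k^{\mbf{T}}$ but is not the statement the argument needs. In the telescoping term $\mc{A}_{K:k+1}(\mc{A}_k-\mc{B}_k)\mc{B}_{k-1:1}$, the outer factors depend on $\vec{T}_k$ and the inner ones on $\vec{T}_{k-1}$, so the only average that can be taken inside the norm at no cost is over $\vec{T}_{k-1}$ alone, with $\vec{T}_k$ held fixed; what must be proved is the \emph{conditional} vanishing $\md{E}_{\vec{T}_{k-1}}\delta_k=0$ for every fixed $\vec{T}_k$. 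This is precisely where the normalizer hypothesis does its work: writing $\vec{\tilde{C}}_k=\vec{C}_k\,[\vec{C}_k\ct\vec{T}_k\vec{C}_k]\,\vec{T}_{k-1}^c$, the bracketed factor lies in $\mbf{T}\tn{n}$ because $\mbf{C}$ normalizes $\mbf{T}$, and left-multiplying the uniformly distributed $\vec{T}_{k-1}^c$ by a fixed group element leaves its distribution uniform, so $\md{E}_{\vec{T}_{k-1}}\mc{E}(\vec{\tilde{C}}_k)=\md{E}_{\vec{T}'}\mc{E}(\vec{C}_k\vec{T}')=\mc{E}_k^{\mbf{T}}$ independently of $\vec{T}_k$. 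Your heuristic that the twirl ``projects $\delta_k$ onto its Pauli-diagonal part'' misdescribes the mechanism: no conjugation twirl of $\delta_k$ is involved, and the cancellation is purely an average over which dressed gate is physically implemented. A related bookkeeping point you should also make explicit is that $\vec{\tilde{C}}_kG_{k-1}\vec{T}_{k-1}=\vec{T}_k\vec{C}_kG_{k-1}$ is independent of $\vec{T}_{k-1}$ (since $\vec{T}_{k-1}^c=G_{k-1}\vec{T}_{k-1}\ct G_{k-1}\ct$), which is what guarantees that $\delta_k$ is the only $\vec{T}_{k-1}$-dependent factor in the $+\mc{I}$ term. With these two points nailed down, the remaining norm accounting in your step (iii) and the separate treatment of $k=1$ (forced because $\vec{T}_0$ is fixed to the identity) go through exactly as in the paper.
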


There are two particularly important scenarios in which the effect of 
gate-dependent contributions need to be considered and which determine the physically relevant value of $K$. In near-term applications 
such as quantum simulators, the following theorem would be applied to the 
entire circuit, while in long-term applications with quantum error correction, 
the following theorem would be applied to fragments corresponding to rounds of 
error correction. Hence under our randomized compiling technique, the noise on 
the easy gates imposes a limit either on the useful length of a circuit without 
error correction or on the distance between rounds of error correction. It is 
important to note that a practical limit on $K$ is already imposed, in the 
absence of our technique, by the simple fact that even Pauli noise accumulates 
linearly in time, so $r(\mc{T}_k)\ll 1/K$ is already required to ensure that 
the output of any realistic circuit remains close to the ideal circuit.

While \cref{thm:gate_dependent} provides a very promising bound, it is unclear 
how to estimate the quantities $\tfrac{1}{2}\md{E}_{\vec{\tilde{C}}_k}\| 
\mc{E}(\vec{\tilde{C}}_k) - \mc{E}_k^{\mbf{T}} \|_\diamond$ without performing 
full process tomography. To remedy this, we now provide the following bound in 
terms of the infidelity, which can be efficiently estimated via randomized 
benchmarking. We expect the following bound is not tight as we use the triangle 
inequality to turn the deviation from the average noise into deviations from no 
noise, which should be substantially larger. However, even the following loose 
bound is sufficient to rigorously guarantee that our technique significantly 
reduces the worst-case error, as illustrated in \cref{fig:noise_reduction} for 
a two-qubit gate in the bulk of a circuit (i.e., with $k>1$).

The following bound could also be substantially improved if the noise on the 
easy gates is known to be close to depolarizing (even if the hard gates 
have strongly coherent errors), as quantified by the 
unitarity~\cite{Wallman2015,Kueng2015,Wallman2015b}. However, rigorously 
determining an improved bound would require analyzing the protocol for 
estimating the unitarity under gate-dependent noise, which is currently 
an open problem.

\begin{thm}\label{thm:local}
For arbitrary noise,
\begin{align}
\md{E}_{\vec{\tilde{C}}_k}\| \mc{E}(\vec{\tilde{C}}_k) - 
\mc{E}_k^{\mbf{T}} \|_\diamond 
\leq 2\epsilon(\mc{E}_k^{\mbf{T}}) + 2\sqrt{\md{E}_{\vec{\tilde{C}}_k} 
\epsilon[\mc{E}(\vec{\tilde{C}}_k)]^2}.
\end{align}	
For $n$-qubit circuits with local noise on the easy gates, 
\begin{align}
\md{E}_{\vec{\tilde{C}}_k}\| \mc{E}(\vec{\tilde{C}}_k) - 
\mc{E}_k^{\mbf{T}} \|_\diamond &\leq 
\sum_{j=1}^n 4\sqrt{6r\Bigl[\overline{\mc{E}}_j^{\mbox{T}}(C_{j,k})\Bigr]}
\end{align}
for $k=1,\ldots,K$, where $\mc{E}_{j.k}^{\mbox{T}} 
=\md{E}_{\tilde{C}_{j,k}}\mc{E}_j (\tilde{C}_{j,k})$ is the local noise on the 
$j$th qubit averaged over the dressed gates in the $k$th cycle.
\end{thm}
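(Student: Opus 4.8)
\emph{Proof sketch.}
The plan is to prove the general (arbitrary-noise) inequality first and then bootstrap it to the local case. For the first bound I would insert the identity channel and use the triangle inequality for the diamond norm,
\begin{align}
\|\mc{E}(\vec{\tilde{C}}_k) - \mc{E}_k^{\mbf{T}}\|_\diamond
\leq \|\mc{E}(\vec{\tilde{C}}_k) - \mc{I}\|_\diamond + \|\mc{I} - \mc{E}_k^{\mbf{T}}\|_\diamond ,
\end{align}
and then average over $\vec{\tilde{C}}_k$. Since $\tfrac12\|\mc{F}-\mc{I}\|_\diamond = \epsilon(\mc{F})$ by \cref{eq:diamond_def}, the right-hand side is $2\md{E}_{\vec{\tilde{C}}_k}\epsilon[\mc{E}(\vec{\tilde{C}}_k)] + 2\epsilon(\mc{E}_k^{\mbf{T}})$, and concavity of the square root (equivalently $\md{E}[X]\leq\sqrt{\md{E}[X^2]}$) upgrades the first term to $2\sqrt{\md{E}_{\vec{\tilde{C}}_k}\epsilon[\mc{E}(\vec{\tilde{C}}_k)]^2}$. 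This is exactly the stated loose bound, and it is loose precisely because the triangle inequality replaces a deviation from the \emph{average} noise by a deviation from \emph{no} noise.

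For the local case I would exploit that the twirling gates are sampled independently on each qubit, so that the average channel factorizes, $\mc{E}_k^{\mbf{T}} = \bigotimes_{j=1}^n \mc{E}_{j,k}^{\mbf{T}}$, while each realization is $\mc{E}(\vec{\tilde{C}}_k) = \bigotimes_{j=1}^n \mc{E}_j(\tilde{C}_{j,k})$. Writing the difference of these two tensor products as the telescoping sum
\begin{align}
\sum_{j=1}^n \Bigl(\bigotimes_{i<j}\mc{E}_{i,k}^{\mbf{T}}\Bigr)\otimes\bigl(\mc{E}_j(\tilde{C}_{j,k}) - \mc{E}_{j,k}^{\mbf{T}}\bigr)\otimes\Bigl(\bigotimes_{i>j}\mc{E}_i(\tilde{C}_{i,k})\Bigr),
\end{align}
applying the triangle inequality, and using multiplicativity of the diamond norm under tensor products together with the fact that every ``filler'' factor is a channel of unit diamond norm, I obtain
\begin{align}
\md{E}_{\vec{\tilde{C}}_k}\|\mc{E}(\vec{\tilde{C}}_k) - \mc{E}_k^{\mbf{T}}\|_\diamond
\leq \sum_{j=1}^n \md{E}_{\tilde{C}_{j,k}}\|\mc{E}_j(\tilde{C}_{j,k}) - \mc{E}_{j,k}^{\mbf{T}}\|_\diamond .
\end{align}

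Each single-qubit term is then bounded by the first part of the theorem with $d=2$, giving $2\epsilon(\mc{E}_{j,k}^{\mbf{T}}) + 2\sqrt{\md{E}_{\tilde{C}_{j,k}}\epsilon[\mc{E}_j(\tilde{C}_{j,k})]^2}$. Applying the upper bound of \cref{eq:fidelity_to_worst} at $d=2$, namely $\epsilon(\mc{F})\leq\sqrt{6\,r(\mc{F})}$, to both terms, and using that the infidelity $r$ is affine in the channel, so that $\md{E}_{\tilde{C}_{j,k}}r[\mc{E}_j(\tilde{C}_{j,k})] = r[\mc{E}_{j,k}^{\mbf{T}}]$, both terms collapse to $2\sqrt{6\,r[\mc{E}_{j,k}^{\mbf{T}}]}$, yielding $4\sqrt{6\,r[\mc{E}_{j,k}^{\mbf{T}}]}$ per qubit; summing over $j$ gives the claim. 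The main thing to be careful about is the bookkeeping in the tensor-product telescoping step: it requires the multiplicativity $\|\mc{A}\otimes\mc{B}\|_\diamond=\|\mc{A}\|_\diamond\|\mc{B}\|_\diamond$ so that an intact channel tensored onto a difference does not inflate the norm, and it relies on the protocol's independent sampling of the twirling gates across qubits for $\mc{E}_k^{\mbf{T}}$ to split as a tensor product; a weaker notion of ``local'' noise (correlated but low weight) would break the factorization and require a different argument.
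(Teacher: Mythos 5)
Your proposal is correct and follows essentially the same route as the paper: the first bound is obtained identically (triangle inequality through $\mc{I}$ followed by Cauchy--Schwarz to replace $\md{E}[\epsilon]$ with $\sqrt{\md{E}[\epsilon^2]}$), and the local bound uses the same ingredients --- tensor-product telescoping with $\|\mc{A}\otimes\mc{B}\|_\diamond\leq\|\mc{A}\|_\diamond\|\mc{B}\|_\diamond$, the relation $\epsilon\leq\sqrt{6r}$ at $d=2$, and linearity of the infidelity --- arriving at the identical constant $4\sqrt{6r}$ per qubit. The only cosmetic difference is that you telescope the difference $\mc{E}(\vec{\tilde{C}}_k)-\mc{E}_k^{\mbf{T}}$ directly across tensor factors and then apply the first bound qubit-by-qubit, whereas the paper first applies the triangle inequality globally and then telescopes each distance-from-identity term separately; your explicit caveat that $\mc{E}_k^{\mbf{T}}$ must factorize as $\bigotimes_j\mc{E}_{j,k}^{\mbf{T}}$ is a point the paper leaves implicit.
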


\begin{figure}
	\includegraphics[height=6cm]{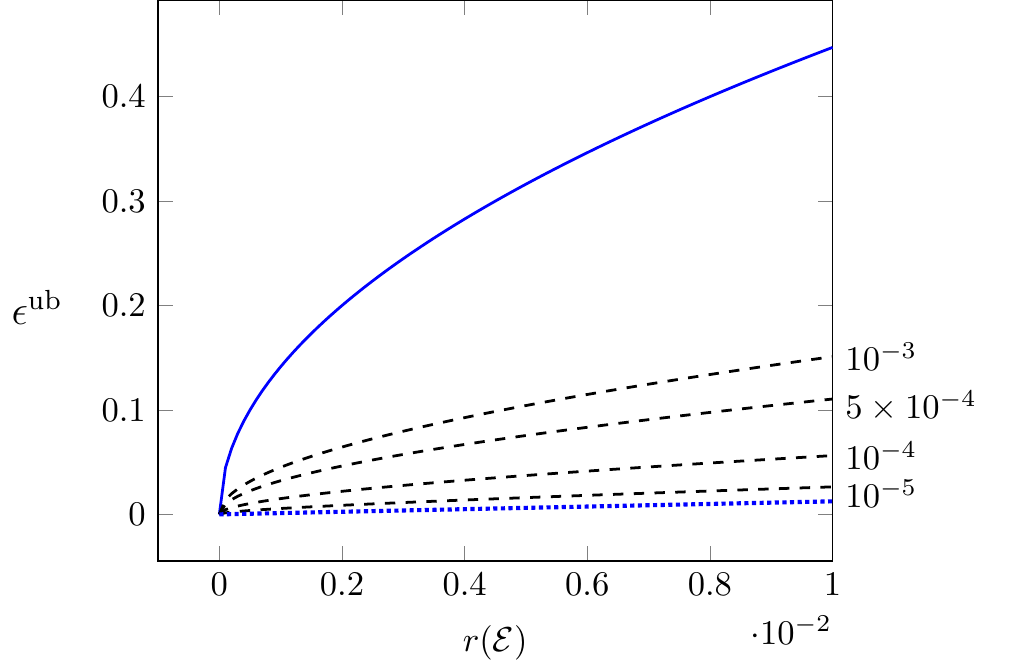}	
\caption{Upper bounds $\epsilon^{\rm ub}$ on the worst case error for a 
two-qubit hard gate in the bulk of a circuit (e.g., a controlled-NOT gate with 
$k>1$) as a function of its infidelity $r(\mc{E})$ with [dashed black, from 
\cref{thm:gate_dependent,thm:local}] and without [solid blue, from 
\cref{eq:fidelity_to_worst}] our tailoring technique under gate-dependent local 
noise on the single-qubit gates with infidelity 
$r(\overline{\mc{E}}^{\mbf{T}}_{j,k})=10^{-5},10^{-4},5\times 			
10^{-3},10^{-3}$ respectively. The worst-case error rate achieved by our 
technique for gate-independent noise (over the dressed gates) is plotted 
for comparison [dotted blue, from 	
\cref{eq:fidelity_to_worst}]}\label{fig:noise_reduction}
\end{figure}

\subsection{Numerical simulations}

Tailoring experimental noise into stochastic noise via our technique provides 
several dramatic advantages, which we now illustrate via numerical simulations.
Our simulations are all of six-qubit circuits with the canonical division into 
easy and hard gates. That is, the easy gates are composed of Pauli gates and 
the phase gate $R = |0\>\< 0| + i|1\>\< 1|$, while the hard gates are the 
Hardamard, $\pi/8$ gate $T = \sqrt{R}$ and the two-qubit controlled-Z gate 
$\Delta(Z) = |0\>\<0|\otimes I + |0\>\<0|\otimes Z$. Such circuits are 
universal for quantum computation and naturally suit many fault-tolerant 
settings, including CSS codes with a transversal $T$ gate (such 
as the 15-qubit Reed-Muller code), color codes and the surface code. 

We quantify the total noise in a noisy implementation $\mc{C}_{\rm noisy}$ of 
an ideal circuit $\mc{C}_{\rm id}$ by the variational distance
\begin{align}\label{eq:tvd}
\tau_{\rm noisy} = \sum_j \tfrac{1}{2}|{\rm Pr}(j|\mc{C}_{\rm noisy}) - {\rm 
Pr}(j|\mc{C}_{\rm id})|
\end{align}
between the probability distributions for ideal computational basis 
measurements after applying $\mc{C}_{\rm noisy}$ and $\mc{C}_{\rm id}$ to a 
system initialized in the $\ket{0}\tn{n}$ state. We do not maximize over states 
and measurements, rather, our results indicate the effect of noise under 
practical choices of preparations and measurements. 

For our numerical simulations, we add gate-dependent over-rotations to each 
gate, that is, we perturb one of the eigenvectors of each gate $U$ by 
$e^{i\delta_U}$. For single-qubit gates, the choice of eigenvector is 
irrelevant (up to a global phase), while for the two-qubit $\Delta(Z)$ gate, we 
add the phase to the $|11\>$ state. 

We perform two sets of numerical simulations to illustrate two particular 
properties. First, \cref{fig:vary_noise} shows that our technique introduces a 
larger relative improvement as the infidelity decreases, that is, approximately 
a factor of two on a log scale, directly analogous to the $r/\sqrt{r}$ scaling 
for the worst case error (although recall that our simulations are for 
computational basis states and measurements and do not maximize the error over 
preparations and measurements). For these simulations, we set $\delta_U$ so 
that the $\Delta(Z)$ gate has an infidelity of $r[\Delta(Z)]$ and so that all 
single-qubit gates have an infidelity of $r[\Delta(Z)]/10$ (regardless of 
whether they are included in the ``easy'' or the ``hard'' set). For the bare 
circuits (blue circles), each data point is the variational distance of a 
randomly-chosen six-qubit circuit with a hundred alternating rounds of easy and 
hard gates, each sampled uniformly from the sets of all possible easy and hard 
gate rounds respectively. For the tailored circuits (red squares), each data 
point is the variational distance between ${\rm Pr}(j|\rm{C}_{\rm id})$ and the 
probability distribution ${\rm Pr}(j|\rm{C}_{\rm noisy})$ averaged over $10^3$ 
randomizations of the bare circuit obtained by replacing the easy gates 
with (compiled) dressed gates as in \cref{eq:dressed}.

Second, \cref{fig:vary_gates} shows that the typical error for both the bare 
and tailored circuits grows approximately linearly with the length of the 
circuit. This suggests that, for typical circuits, the primary reason that the 
total error is reduced by our technique is not because it prevents the 
worst-case quadratic accumulation of fidelity with the circuit length (although 
it does achieve this). Rather, the total error is reduced because the 
contribution from each error location is reduced, where the number of error 
locations grows linearly with the circuit size. For these simulations, we set 
$\delta_U$ so that the $\Delta(Z)$ gate has an infidelity of $10^{-3}$ and the 
easy gates have infidelities of $10^{-5}$. For the bare circuits (blue 
circles), each data point is the variational distance of a randomly-chosen 
six-qubit circuit as above with $K$ alternating rounds of easy and hard gates, 
where $K$ varies from five to a hundred. The tailored circuits (red squares) 
again give the variational distance between the ideal distribution and the 
probability distribution averaged over $10^3$ randomizations of the bare 
circuit.

\begin{figure}
\includegraphics[height=6cm]{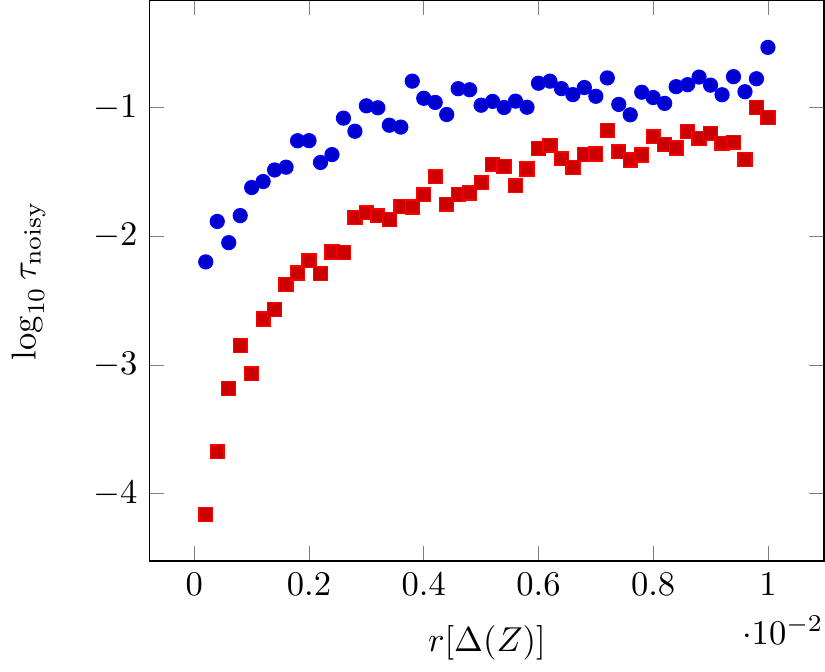}	
\caption{Semilog plots of the error $\tau_{\rm noise}$ from 
	\cref{eq:tvd} with respect to computational basis measurement 
	outcomes as a function of the average gate error $r[\Delta(Z)]$ of the 
	noise on the $\Delta(Z)$ gate in six-qubit bare (blue circles) and 
	tailored (red squares) circuits. Each data point corresponds to an 
	independent random circuit with 100 cycles, where the (gate-dependent) 
	noise on each gates is an over-rotation about the relevant eigenbasis with 
	infidelity $r[\Delta(Z)]$ for the $\Delta(Z)$ gates and $r[\Delta(Z)]/10$ 
	for all single-qubit gates. The data points for the tailored noise 
	correspond to an average over $10^3$ independent randomizations of the 
	corresponding bare circuit via \cref{eq:dressed}. The total error for the 
	bare and tailored circuits differs by a factor of approximately 2 on a log 
	scale, mirroring the separation between the worst-case errors for 
	stochastic and unitary channels from \cref{eq:fidelity_to_worst} (although 
	here the error is not maximized over preparations and 
	measurements).}\label{fig:vary_noise}
\end{figure}

\begin{figure}
\includegraphics[height=6cm]{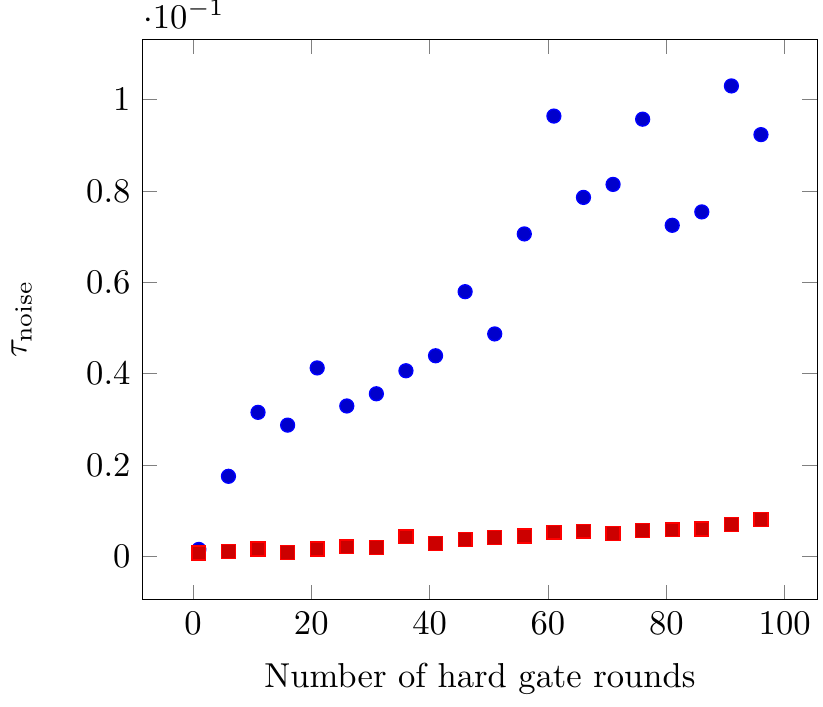}	
\caption{Plots of the error $\tau_{\rm noise}$ from 
	\cref{eq:tvd} with respect to computational basis measurement 
	outcomes as a function of the average gate error $r[\Delta(Z)]$ of the 
	noise on the $\Delta(Z)$ gate in six-qubit bare (blue circles) and 
	tailored (red squares) circuits. Each data point corresponds to an 
	independent random circuit with $K$ cycles, where the (gate-dependent) 
	noise on each gates is an over-rotation about the relevant eigenbasis with 
	infidelity $10^{-3}$ for the $\Delta(Z)$ gates and $10^{-5}$ for all 
	single-qubit gates. The data points for the tailored noise correspond to an 
	average over $10^3$ independent randomizations of the corresponding bare 
	circuit via \cref{eq:dressed}. The error rate grows approximately linearly 
	with the number of gate cycles, suggesting that the dominant reason for the 
	error suppression is that the error at each location is suppressed (where 
	there are a linear number of total locations), rather than the suppression 
	of possible quadratic accumulation of coherent errors 
	between locations.}\label{fig:vary_gates}
\end{figure}

\section{Discussion}

We have shown that arbitrary Markovian noise processes can be reduced to 
effective Pauli processes by compiling different sets of uniformly random gates 
into sequential operations. This randomized compiling technique can reduce the 
worst-case error rate by orders of magnitude and enables threshold estimates 
for general noise models to be obtained directly from threshold estimates for 
Pauli noise. Physical implementations can then be evaluated by directly 
comparing these threshold estimates to the average error rate $r$ estimated via 
efficient experimenal techniques, such as randomized benchmarking, to determine 
whether the experimental implementation has reached the fault-tolerant regime. 
More specifically, the average error rate $r$ is that of the tailored channel 
for the composite noise on a round of easy and hard gates and this can be 
directly estimated using interleaved randomized benchmarking with the relevant 
choice of group~\cite{Magesan2012}.

Our technique can be applied directly to gate sets that are universal for 
quantum computation, including all elements in a large class of fault-tolerant 
proposals. Moreover, our technique only requires \textit{local} gates to 
tailor general noise on multi-qubit gates into Pauli noise. Our numerical 
simulations in \cref{fig:vary_noise,fig:vary_gates} demonstrate that our 
technique can reduce worst-case errors by orders of magnitude. Furthermore, our 
scheme will generally 
produce an even greater effect as fault-tolerant protocols are scaled up, since 
fault-tolerant protocols are designed to suppress errors, for 
example, $\epsilon\to \epsilon^k$ for some scale factor $k$ (e.g., number of 
levels of concatenation), so any reduction at the physical level is improved exponentially with $k$.

A particularly significant open problem is the robustness of our technique to 
noise that remains non-Markovian on a time-scale longer than a typical gate 
time. Non-Markovian noise can be mitigated by techniques such as 
randomized dynamic decoupling~\cite{Viola2005,Santos2006}, which correspond to 
applying random sequences of Pauli operators to echo out non-Markovian 
contributions. Due to the random gates compiled in at each time step, we expect 
that our technique may also suppress non-Markovian noise in a similar manner.

\section{Methods}

We now prove \cref{thm:gate_dependent,thm:local}.

\begin{thm}\label{thm:gate_dependent_general}
Let $\mc{C}_{\rm GD}$ and $\mc{C}_{\rm GI}$ be tailored circuits with 
gate-dependent and gate-independent noise on the easy gates respectively. Then
\begin{align}
\|\mc{C}_{\rm GD}-\mc{C}_{\rm GI}\|_\diamond 
\leq \sum_{k=1}^K \md{E}_{\vec{T}_1,\ldots,\vec{T}_K}  
\|\mc{E}(\vec{\tilde{C}}_k)-\mc{E}_k^{\mbf{T}}\|_\diamond.
\end{align}
When $\mbf{T}$ is a group normalized by $\mbf{C}$, this can be improved to
\begin{align}
\|\mc{C}_{\rm GD},\mc{C}_{\rm GI}\|_\diamond 
&\leq \sum_{k=2}^K 2\md{E}_{\vec{\tilde{C}}_k}\| \mc{E}(\vec{\tilde{C}}_k) - 
\mc{E}_k^{\mbf{T}}\|\epsilon\left[\mc{E}(G_{k-1})\mc{E}_{k-1}^{\mbf{T}}\right]  
 \notag\\
&\quad + \md{E}_{\vec{\tilde{C}}_1}\|\mc{E}(\vec{\tilde{C}}_1) - 
\mc{E}_1^{\mbf{T}}\|_\diamond.
\end{align}
\end{thm}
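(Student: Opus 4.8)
The plan is to prove both bounds by a hybrid (telescoping) argument over the $K$ cycles. For a fixed choice of all twirling gates $\vec{T}_1,\ldots,\vec{T}_K$, the noisy implementations underlying $\mc{C}_{\rm GD}$ and $\mc{C}_{\rm GI}$ are compositions of exactly the same ideal dressed operations $\tilde C_k,G_k$ and the same hard-gate noise channels $\mc{E}(G_k)$; they differ only in that the easy-gate noise in cycle $k$ is $\mc{E}(\vec{\tilde C}_k)$ in the former and its randomization-average $\mc{E}_k^{\mbf{T}}$ in the latter. I would define the hybrid circuit $\mc{H}_m$ that uses $\mc{E}(\vec{\tilde C}_k)$ for $k\le m$ and $\mc{E}_k^{\mbf{T}}$ for $k>m$, so that (after averaging over the twirling) $\mc{C}_{\rm GI}=\md{E}_{\vec{T}}\,\mc{H}_0$ and $\mc{C}_{\rm GD}=\md{E}_{\vec{T}}\,\mc{H}_K$. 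By convexity of $\|\cdot\|_\diamond$ and the telescoping identity $\mc{H}_K-\mc{H}_0=\sum_{m=1}^K(\mc{H}_m-\mc{H}_{m-1})$, it suffices to bound each $\md{E}_{\vec{T}}\|\mc{H}_m-\mc{H}_{m-1}\|_\diamond$. For fixed twirling, $\mc{H}_m-\mc{H}_{m-1}$ is a composition identical to the full circuit except that cycle $m$'s easy-gate noise is replaced by the mean-zero deviation $\delta_m:=\mc{E}(\vec{\tilde C}_m)-\mc{E}_m^{\mbf{T}}$, flanked on the left by $\mc{E}(G_m)$ (and the rest of the circuit) and on the right by $\tilde C_m$ (and the earlier cycles). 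Using the triangle inequality, submultiplicativity of $\|\cdot\|_\diamond$, and the fact that CPTP maps and unitary conjugations have diamond norm $1$, one discards everything except $\delta_m$ to get $\|\mc{H}_m-\mc{H}_{m-1}\|_\diamond\le\|\delta_m\|_\diamond$; averaging over the twirling gives the first, unconditional bound.

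For the improved bound, the key point is that discarding the channels adjacent to $\delta_m$ is wasteful because $\delta_m$ has zero average over the randomization: if the composite noise $\mc{E}(G_{m-1})\mc{E}(\vec{\tilde C}_{m-1})$ of the preceding cycle were trivial, the whole term would vanish after averaging. So for $m\ge 2$ I would keep that factor. Commuting the correction gate through $G_{m-1}$ via $\vec{T}_{m-1}^c G_{m-1}\vec{T}_{m-1}=G_{m-1}$ places the twirling gate $\vec{T}_{m-1}$ (from $\tilde C_{m-1}$) immediately around $\mc{E}(G_{m-1})\mc{E}(\vec{\tilde C}_{m-1})$, and the hypothesis that $\mbf{C}$ normalizes $\mbf{T}$ lets me shuffle the remaining twirling and correction operators so that averaging over $\vec{T}_{m-1}$ and $\vec{T}_m$ simultaneously (i) turns $\mc{E}(G_{m-1})\mc{E}(\vec{\tilde C}_{m-1})$ into a Pauli channel whose diamond distance from $\mc{I}$ is at most $2\epsilon\!\left[\mc{E}(G_{m-1})\mc{E}_{m-1}^{\mbf{T}}\right]$ (twirling cannot increase $\epsilon$, and averaging the gate-dependent noise inside the twirl reproduces $\mc{E}_{m-1}^{\mbf{T}}$) and (ii) leaves the cycle-$m$ deviation still with zero mean. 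I would then write the cycle-$(m-1)$ noise as $\mc{I}+(\mc{N}_{m-1}-\mc{I})$: the $\mc{I}$ piece contributes nothing by (ii), while the $(\mc{N}_{m-1}-\mc{I})$ piece is handled using $\|\mc{N}_{m-1}-\mc{I}\|_\diamond=2\epsilon[\,\cdot\,]$ and submultiplicativity, leaving a residual factor $\md{E}_{\vec{\tilde C}_m}\|\delta_m\|$ that --- because it now appears composed against a fixed channel rather than acting on an unrestricted entangled input --- only needs to be controlled in the non-stabilized induced norm $\|\cdot\|$. Summing over $m\ge 2$ and bounding the $m=1$ term (which has no preceding cycle and hence is estimated exactly as in the first part) yields the stated inequality.

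The main obstacle is making steps (i) and (ii) precise, since the dressed gate $\vec{\tilde C}_m=\vec{T}_m\vec{C}_m\vec{T}_{m-1}^c$ depends on the same twirling gate $\vec{T}_{m-1}$ that also appears inside cycle $m-1$, so the average over $\vec{T}_{m-1}$ couples the two cycles and cannot be factored naively. One must verify, using the normalization identity $C\ct T C\in\mbf{T}$, that after commuting all the twirling and correction operators the coupled average genuinely splits into ``Pauli-twirled cycle-$(m-1)$ noise'' times ``mean-zero cycle-$m$ deviation'', that the residual gate-dependence of $\mc{E}(\vec{\tilde C}_{m-1})$ inside the twirl can be replaced by $\mc{E}_{m-1}^{\mbf{T}}$ without generating uncontrolled higher-order corrections, and that the correction operators $\vec{T}_{m-1}^c$ --- which are only required to lie in $\mbf{C}\tn{n}$ rather than $\mbf{T}\tn{n}$ --- do not obstruct this cancellation. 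Once that factorization is established the remaining estimates are routine operator-norm bookkeeping.
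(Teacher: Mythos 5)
Your first bound is correct and is essentially the paper's argument: the hybrid/telescoping decomposition, the triangle inequality, submultiplicativity of $\|\cdot\|_\diamond$, and $\|\mc{A}\|_\diamond=1$ for channels give $\sum_k\md{E}\|\mc{E}(\vec{\tilde{C}}_k)-\mc{E}_k^{\mbf{T}}\|_\diamond$ exactly as in the paper (the paper writes the same telescoping as $\mc{A}_{K:1}-\mc{B}_{K:1}=\sum_k\mc{A}_{K:k+1}(\mc{A}_k-\mc{B}_k)\mc{B}_{k-1:1}$).

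For the improved bound there is a genuine gap, and it is precisely the obstacle you flag yourself. You orient the hybrid so that cycles \emph{before} the defective cycle $m$ carry the gate-dependent noise $\mc{E}(\vec{\tilde{C}}_{m-1})$. That factor depends on $\vec{T}_{m-1}$, as does $\delta_m=\mc{E}(\vec{\tilde{C}}_m)-\mc{E}_m^{\mbf{T}}$ (through $\vec{T}_{m-1}^c$ in $\vec{\tilde{C}}_m$), so the average over $\vec{T}_{m-1}$ of the product does not factor; even if you push the argument through, the remainder term after splitting off $\mc{I}$ is the \emph{correlated} expectation $\md{E}_{\vec{T}_{m-1}}\bigl[\|\delta_m\|_\diamond\,\|\mc{E}(G_{m-1})\mc{E}(\vec{\tilde{C}}_{m-1})-\mc{I}\|_\diamond\bigr]$, which involves the gate-dependent noise of cycle $m-1$ and is not the stated quantity $\epsilon\bigl[\mc{E}(G_{k-1})\mc{E}_{k-1}^{\mbf{T}}\bigr]$. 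Your proposed remedy---replacing $\mc{E}(\vec{\tilde{C}}_{m-1})$ by $\mc{E}_{m-1}^{\mbf{T}}$ inside the twirl---is exactly the step you cannot justify, and the paper never needs it. The paper's telescoping puts the gate-\emph{independent} noise on the earlier cycles and the gate-dependent noise on the later ones, so the factor adjacent to $\delta_k$ is $\gamma_k=\vec{\tilde{C}}_kG_{k-1}\mc{E}(G_{k-1})\mc{E}_{k-1}^{\mbf{T}}\vec{T}_{k-1}$ with $\mc{E}_{k-1}^{\mbf{T}}$ deterministic; the only $\vec{T}_{k-1}$-dependence besides $\delta_k$ is the ideal unitary, which collapses via $\vec{T}_{k-1}^cG_{k-1}\vec{T}_{k-1}=G_{k-1}$ after the substitution $\mc{E}(G_{k-1})\mc{E}_{k-1}^{\mbf{T}}=\mc{I}+(\mc{E}(G_{k-1})\mc{E}_{k-1}^{\mbf{T}}-\mc{I})$. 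Reversing the orientation of your hybrid removes the coupling entirely and yields the stated bound.

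Separately, you leave the central cancellation $\md{E}_{\vec{T}_{k-1}}\delta_k=0$ as something ``to verify.'' This is the one place the normalization hypothesis is actually used, so it must be proved, not assumed: the paper writes $\vec{\tilde{C}}_k=\vec{C}_k[\vec{C}_k\ct\vec{T}_k\vec{C}_k]\vec{T}_{k-1}^c$, uses $\vec{C}_k\ct\vec{T}_k\vec{C}_k\in\mbf{T}$ and translation invariance of the uniform average over a group to conclude $\md{E}_{\vec{T}_{k-1}}\mc{E}(\vec{\tilde{C}}_k)=\md{E}_{\vec{T}'}\mc{E}(\vec{C}_k\vec{T}')=\mc{E}_k^{\mbf{T}}$ for every fixed $\vec{T}_k$. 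Your worry about $\vec{T}_{k-1}^c$ lying only in $\mbf{C}\tn{n}$ is legitimate and is resolved by this same translation argument; without writing it out, the improved bound is not established.
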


\begin{proof}
Let
\begin{align}
\mc{A}_k &=G_k \mc{E}(G_k)\mc{E}(\vec{\tilde{C}}_k)\vec{\tilde{C}}_k
\notag\\
\mc{B}_k &= G_k \mc{E}(G_k)\mc{E}_k^{\mbf{T}} \vec{\tilde{C}}_k,
\end{align}
where $\mc{A}_k$ and $\mc{B}_k$ implicitly depend on the choice of twirling 
gates. Then the tailored circuits under gate-dependent and gate-independent 
noise are
\begin{align}
\mc{C}_{\rm GD} &= \md{E}_a\mc{A}_{K:1}
\notag\\
\mc{C}_{\rm GI} &= \md{E}_a\mc{B}_{K:1},
\end{align} 
respectively, where $\mc{X}_{a:b} = \mc{X}_a \ldots \mc{X}_b$ (note that this 
product is non-commutative) with $\mc{A}_{K:K+1} = \mc{B}_{0:1} = \mc{I}$ and 
the expectation is over all $\vec{T}_a$ for $a=1,\ldots,K$. Then by a 
straightforward induction argument,
\begin{align}\label{eq:telescope}
\mc{A}_{K:1} - \mc{B}_{K:1}
&= \sum_{k=1}^K \mc{A}_{K:k+1} (\mc{A}_k - 
\mc{B}_k) \mc{B}_{k-1:1} 
\end{align}
for any fixed choice of the twirling gates. By the triangle inequality,
\begin{align}\label{eq:simple}
\|\mc{C}_{\rm GD} - \mc{C}_{\rm GI} \|_\diamond
&= \|\md{E}_a \sum_{k=1}^K \mc{A}_{K:k+1} (\mc{A}_k - \mc{B}_k) \mc{B}_{k-1:1}  
\|_\diamond \notag\\
&\leq \md{E}_a \sum_{k=1}^K \| \mc{A}_{K:k+1} (\mc{A}_k - \mc{B}_k) 
\mc{B}_{k-1:1}  \|_\diamond \notag\\
&\leq \md{E}_a \sum_{k=1}^K 
\|\mc{E}(\vec{\tilde{C}}_k)-\mc{E}_k^{\mbf{T}}\|_\diamond,
\end{align}
where the second inequality follows from the submultiplicativity
\begin{align}
\|\mc{AB}\|_\diamond &\leq \|\mc{A}\|_\diamond \|\mc{B}\|_\diamond
\end{align}
of the diamond norm and the normalization $\|\mc{A}\|_\diamond =1$ which holds 
for all quantum channels $\mc{A}$.

We can substantially improve the above bound by evaluating some of the averages 
over twirling gates before applying the triangle inequality. In particular, 
leaving the average over $\vec{T}_{k-1}$ inside the diamond norm in 
\cref{eq:simple} for every term except $k=1$ gives
\begin{align}
\|\mc{C}_{\rm GD} - \mc{C}_{\rm GI} \|_\diamond
&\leq  \sum_{k=1}^K \md{E}_{a\neq k-1} \| \md{E}_{k-1} \delta_k\gamma_k 
\|_\diamond \notag\\
&\quad + \md{E}_a \|\mc{E}(\vec{\tilde{C}}_1)-\mc{E}_1^{\mbf{T}}\|_\diamond,
\end{align}
where
\begin{align}
\delta_k &= \mc{E}(\vec{\tilde{C}}_k) - \mc{E}_k^{\mbf{T}} \notag\\
\gamma_k &= \vec{\tilde{C}}_k G_{k-1} \mc{E}(G_{k-1})\mc{E}_{k-1}^{\mbf{T}} 
\vec{T}_{k-1},
\end{align}
and $\delta_k\gamma_k$ is the only factor of $\mc{A}_{K:k+1} (\mc{A}_k - 
\mc{B}_k) \mc{B}_{k-1:1}$ that depends on $\vec{T}_{k-1}$. Substituting  
$\mc{E}(G_{k-1})\mc{E}_{k-1}^{\mbf{T}} = (\mc{E}(G_{k-1})\mc{E}_{k-1}^{\mbf{T}} 
- \mc{I}) + \mc{I}$ in $\gamma_k$ gives
\begin{align}
\md{E}_{k-1} \delta_k
&= \md{E}_{k-1} \delta_k \vec{\tilde{C}}_k G_{k-1}  
\left[\mc{E}(G_{k-1})\mc{E}_{k-1}^{\mbf{T}} - \mc{I}\right] \vec{T}_{k-1} 
\notag\\
&\quad + \md{E}_{k-1} \delta_k \vec{T}_k\vec{C}_k G_{k-1} ,
\end{align}
where the only factor in the second term that depends on $\vec{T}_{k-1}$ is 
$\delta_k$. When $\mbf{T}$ is a group normalized by $\mbf{C}$,
\begin{align}
\md{E}_{k-1} \delta_k &= \md{E}_{k-1}
\mc{E}(\vec{\tilde{C}}_k) - \mc{E}_k^{\mbf{T}} \notag\\
&= \md{E}_{k-1}
\mc{E}(\vec{C}_k[\vec{C}_k\ct\vec{T}_k\vec{C}_k]\vec{T}_{k-1}^c) - 
\mc{E}_k^{\mbf{T}} \notag\\
&= \md{E}_{\vec{T}'} 
\mc{E}(\vec{C}_k\vec{T}') - \mc{E}_k^{\mbf{T}} \notag\\
&= 0
\end{align}
for any fixed value of $\vec{T}_k$, using the fact that $\{hg:g\in\mbf{G}\} = 
\mbf{G}$ for any group $\mbf{G}$ and $h\in\mbf{G}$ and that $\md{E}_{\vec{T}'} 
\mc{E}(\vec{C}_k\vec{T}')$ is independent of $\vec{T}_k$. Therefore
\begin{align}\label{eq:guts}
\|\mc{C}_{\rm GD} - \mc{C}_{\rm GI}\|_\diamond
&= \|\md{E}_j (\mc{A}_{K:1} - \mc{B}_{K:1}) \|_\diamond \notag\\
&\leq \|\delta_1\|_\diamond + \sum_{k=2}^K \md{E}_{j\neq k-1}\| 
\md{E}_{k-1}\delta_k \gamma_k  \|_\diamond \notag\\
&\leq \|\delta_1\|_\diamond + \sum_{k=2}^K \md{E}_{j\neq k-1}\| 
\md{E}_{\vec{T}_{k-1}}\delta_k \vec{\tilde{C}}_k  \notag\\
&\quad\times  G_{k-1}\left[\mc{E}(G_{k-1})\mc{E}_{k-1}^{\mbf{T}} - 
\mc{I}\right] 
\vec{T}_{k-1}\|_\diamond \notag\\
&\leq \sum_{k=2}^K \md{E}_j \| \delta_k\|_\diamond
\|\mc{E}(G_{k-1})\mc{E}_{k-1}^{\mbf{T}} - \mc{I} \|_\diamond \notag\\
&\quad + \|\delta_1\|_\diamond,
\end{align}
where we have had to split the sum over $k$ as $\vec{T}_0$ is fixed to the 
identity.
\end{proof}

\begin{thm}
For arbitrary noise,
\begin{align}
\md{E}_{\vec{\tilde{C}}_k}\| \mc{E}(\vec{\tilde{C}}_k) - 
\mc{E}_k^{\mbf{T}} \|_\diamond 
\leq 2\epsilon(\mc{E}_k^{\mbf{T}}) + 2\sqrt{\md{E}_{\vec{\tilde{C}}_k} 
	\epsilon[\mc{E}(\vec{\tilde{C}}_k)]^2}.
\end{align}	
For $n$-qubit circuits with local noise on the easy gates, 
\begin{align}
\md{E}_{\vec{\tilde{C}}_k}\| \mc{E}(\vec{\tilde{C}}_k) - 
\mc{E}_k^{\mbf{T}} \|_\diamond &\leq 
\sum_{j=1}^n 4\sqrt{6r\Bigl[\overline{\mc{E}}_j^{\mbox{T}}(C_{j,k})\Bigr]}
\end{align}
for $k=1,\ldots,K$, where $\mc{E}_{j.k}^{\mbox{T}} 
=\md{E}_{\tilde{C}_{j,k}}\mc{E}_j (\tilde{C}_{j,k})$ is the local noise on the 
$j$th qubit averaged over the dressed gates in the $k$th cycle.
\end{thm}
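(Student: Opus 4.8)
The plan is to prove the two bounds separately. For the first bound, I would start from the triangle inequality $\|\mc{E}(\vec{\tilde{C}}_k) - \mc{E}_k^{\mbf{T}}\|_\diamond \leq \|\mc{E}(\vec{\tilde{C}}_k) - \mc{I}\|_\diamond + \|\mc{E}_k^{\mbf{T}} - \mc{I}\|_\diamond = 2\epsilon[\mc{E}(\vec{\tilde{C}}_k)] + 2\epsilon(\mc{E}_k^{\mbf{T}})$, using the definition in \cref{eq:diamond_def}. Taking the expectation over $\vec{\tilde{C}}_k$ and applying the Cauchy--Schwarz (or Jensen) inequality $\md{E}_{\vec{\tilde{C}}_k}\epsilon[\mc{E}(\vec{\tilde{C}}_k)] \leq \sqrt{\md{E}_{\vec{\tilde{C}}_k}\epsilon[\mc{E}(\vec{\tilde{C}}_k)]^2}$ yields the claimed inequality. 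This step is essentially bookkeeping; the only subtlety is that we deliberately accept looseness here, replacing ``deviation from the average'' by ``deviation from the identity,'' which as the text notes is expected to overcount.

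For the second bound, the idea is to exploit the tensor-product structure of local noise. First I would write $\mc{E}(\vec{\tilde{C}}_k) = \bigotimes_{j=1}^n \mc{E}_j(\tilde{C}_{j,k})$ and $\mc{E}_k^{\mbf{T}} = \bigotimes_{j=1}^n \overline{\mc{E}}_j^{\mbf{T}}(C_{j,k})$ (the latter being a tensor product once the average factorizes, since the twirling gates on distinct qubits are independent). Then I would use a standard telescoping/hybrid argument for tensor products: $\|\bigotimes_j \mc{A}_j - \bigotimes_j \mc{B}_j\|_\diamond \leq \sum_j \|\mc{A}_j - \mc{B}_j\|_\diamond$, which follows from inserting intermediate products and using submultiplicativity of $\|\cdot\|_\diamond$ together with $\|\mc{A}_j\|_\diamond = \|\mc{B}_j\|_\diamond = 1$ for channels (the same ingredients used in the proof of \cref{thm:gate_dependent_general}). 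This reduces the problem to bounding each single-qubit term $\md{E}_{\tilde{C}_{j,k}}\|\mc{E}_j(\tilde{C}_{j,k}) - \overline{\mc{E}}_j^{\mbf{T}}(C_{j,k})\|_\diamond$.

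To bound each single-qubit term, I would again use the triangle inequality to pass to deviations from the identity, $\|\mc{E}_j(\tilde{C}_{j,k}) - \overline{\mc{E}}_j^{\mbf{T}}(C_{j,k})\|_\diamond \leq 2\epsilon[\mc{E}_j(\tilde{C}_{j,k})] + 2\epsilon[\overline{\mc{E}}_j^{\mbf{T}}(C_{j,k})]$, then invoke the fidelity-to-worst-case bound \cref{eq:fidelity_to_worst} with $d=2$, giving $\epsilon \leq \sqrt{6\,r}$. Averaging over $\tilde{C}_{j,k}$ and applying Jensen's inequality $\md{E}\sqrt{r[\mc{E}_j(\tilde{C}_{j,k})]} \leq \sqrt{\md{E}\, r[\mc{E}_j(\tilde{C}_{j,k})]}$, together with the fact that the average infidelity equals the infidelity of the average channel (since $r$ is linear in the channel), $\md{E}_{\tilde{C}_{j,k}} r[\mc{E}_j(\tilde{C}_{j,k})] = r[\overline{\mc{E}}_j^{\mbf{T}}(C_{j,k})]$, collapses both terms into $4\sqrt{6\,r[\overline{\mc{E}}_j^{\mbf{T}}(C_{j,k})]}$ per qubit, and summing over $j$ completes the proof.

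The main obstacle I anticipate is the bookkeeping around the factorization of the twirled channel $\mc{E}_k^{\mbf{T}}$ into a tensor product of single-qubit averaged channels: one must check that averaging the product noise $\bigotimes_j \mc{E}_j(\tilde{C}_{j,k})$ over the (independent) twirling gates $\vec{T}_k, \vec{T}_{k-1}$ genuinely produces $\bigotimes_j \overline{\mc{E}}_j^{\mbf{T}}(C_{j,k})$ rather than a correlated mixture — this is where the assumption that the noise on the easy gates is \emph{local} (product form) is essential, and also where the definition of $\overline{\mc{E}}_j^{\mbf{T}}(C_{j,k})$ as the per-qubit average must be reconciled with the global average $\mc{E}_k^{\mbf{T}}=\md{E}_{\vec{\tilde{C}}_k}\mc{E}(\vec{\tilde{C}}_k)$. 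Everything else is a routine assembly of the triangle inequality, submultiplicativity of the diamond norm, \cref{eq:fidelity_to_worst}, and Jensen's inequality.
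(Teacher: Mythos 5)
Your first bound is proved exactly as the paper proves it: the triangle inequality through the identity gives $2\epsilon[\mc{E}(\vec{\tilde{C}}_k)] + 2\epsilon(\mc{E}_k^{\mbf{T}})$, and Cauchy--Schwarz (equivalently Jensen) converts the average of $\epsilon$ into the root-mean-square. No issues there.

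For the local-noise bound your ingredients match the paper's (a per-qubit decomposition, $\epsilon\leq\sqrt{6r}$ from \cref{eq:fidelity_to_worst} with $d=2$, Jensen, and linearity of the infidelity), but you perform the first two steps in the opposite order, and that ordering creates the gap you yourself flag as the ``main obstacle.'' You telescope the difference $\mc{E}(\vec{\tilde{C}}_k)-\mc{E}_k^{\mbf{T}}$ as a difference of two tensor products, which requires the factorization $\md{E}_{\vec{\tilde{C}}_k}\bigotimes_j\mc{E}_j(\tilde{C}_{j,k})=\bigotimes_j\md{E}_{\tilde{C}_{j,k}}\mc{E}_j(\tilde{C}_{j,k})$. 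That factorization fails in general: the correction gates $\vec{T}_{k-1}^c=G_{k-1}\vec{T}_{k-1}\ct G_{k-1}\ct$ are individually single-qubit gates but are correlated across qubits whenever $G_{k-1}$ is entangling (e.g.\ $\Delta(Z)(X\otimes I)\Delta(Z)\ct=X\otimes Z$), so the dressed gates on distinct qubits are not independent random variables and the averaged channel $\mc{E}_k^{\mbf{T}}$ is a correlated mixture of product channels rather than a product channel. The paper sidesteps this by first applying the triangle inequality through the identity (i.e.\ starting from the first bound) and only then decomposing each deviation-from-identity term as $\epsilon[\mc{E}(\vec{\tilde{C}}_k)]\leq\sum_j\epsilon[\mc{E}_j(\tilde{C}_{j,k})]$; the expectation of each per-qubit term then involves only the marginal distribution of $\tilde{C}_{j,k}$, so no independence across qubits is needed. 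Your argument goes through once you swap the order---triangle inequality first, tensor decomposition second---which is exactly the paper's route; as literally written, your decomposition step would fail for entangling hard gates.
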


\begin{proof}
First note that, by the triangle inequality,
\begin{align}\label{eq:bad_bound}
\md{E}_{\vec{\tilde{C}}_k}\| \mc{E}(\vec{\tilde{C}}_k) - 
\mc{E}_k^{\mbf{T}} \|_\diamond 
&= \md{E}_{\vec{\tilde{C}}_k}\| \mc{E}(\vec{\tilde{C}}_k) - \mc{I} 
+ \mc{I} - \mc{E}_k^{\mbf{T}} \|_\diamond  \notag \\
&\leq \md{E}_{\vec{\tilde{C}}_k}\| \mc{E}(\vec{\tilde{C}}_k) - \mc{I} 
\|_\diamond + \|\mc{I} - \mc{E}_k^{\mbf{T}} \|_\diamond \notag \\
&\leq \md{E}_{\vec{\tilde{C}}_k}2\epsilon[\mc{E}(\vec{\tilde{C}}_k)] + 
2\epsilon(\mc{E}_k^{\mbf{T}}).
\end{align}
By the Cauchy-Schwarz inequality,
\begin{align}\label{eq:CS_expectation}
\left(\md{E}_{\vec{\tilde{C}}_k}\epsilon[\mc{E}(\vec{\tilde{C}}_k)] \right)^2
&= \left(\sum_{\vec{\tilde{C}}_k} |\# \vec{\tilde{C}}_k|^{-1} 
\epsilon[\mc{E}(\vec{\tilde{C}}_k)] \right)^2\notag\\
&\leq \left(\sum_{\vec{\tilde{C}}_k} |\# \vec{\tilde{C}}_k|^{-2} \right)
\left(\sum_{\vec{\tilde{C}}_k}\epsilon[\mc{E}(\vec{\tilde{C}}_k)]^2 
\right)\notag\\
&\leq |\# \vec{\tilde{C}}_k|^{-1}
\left(\sum_{\vec{\tilde{C}}_k}\epsilon[\mc{E}(\vec{\tilde{C}}_k)]^2 
\right)\notag\\
&= \md{E}_{\vec{\tilde{C}}_k} \epsilon[\mc{E}(\vec{\tilde{C}}_k)]^2 ,
\end{align}	
where $\#\vec{\tilde{C}}_k$ is the number of different dressed gates in the 
$k$th round.

For local noise, that is, noise of the form $\mc{E}_1\otimes \ldots \mc{E}_n$ 
where $\mc{E}_j$ is the noise on the $j$th qubit, 
\begin{align}
\epsilon[\mc{E}(\vec{\tilde{C}}_k)]&= \tfrac{1}{2}
\| \bigotimes_{j=1}^n\mc{E}_j(\tilde{C}_{j,k}) - \mc{I} \|_\diamond \notag\\
&\leq  \sum_{j=1}^n \tfrac{1}{2}
\| \mc{E}_j(\tilde{C}_{j,k}) - \mc{I} \|_\diamond \notag\\
&\leq \sum_{j=1}^n \epsilon[\mc{E}(\tilde{C}_{j,k})],
\end{align}
where we have used the analog of \cref{eq:telescope} for the tensor product and
\begin{align}\label{eq:diamond_tensor}
\|A\otimes B\|_\diamond &\leq \|A\|_\diamond\|B\|_\diamond,
\end{align}
which holds for all $A$ and $B$ due to the submultiplicativity of the diamond 
norm, and the equality $\|A\otimes I\|_\diamond = \|A\|_\diamond$. Similarly,
\begin{align}
\epsilon(\mc{E}_k^{\mbf{T}}) = \sum_{j=1}^n \epsilon(\mc{E}_{j,k}^{\mbf{T}})
\end{align}
where $\mc{E}_{j,k}^{\mbf{T}} = \md{E}_{T_{j,k-1},T_{j,k}} 
\mc{E}(\tilde{C}_{j,k})$. We then have
\begin{align}
\md{E}_{\vec{\tilde{C}}_k}\epsilon[\mc{E}(\vec{\tilde{C}}_k)]
&\leq \sum_{j=1}^n \md{E}_{\tilde{C}_{j,k}}\epsilon[\mc{E}(\tilde{C}_{j,k})] 
\notag\\
&\leq \sum_{j=1}^n 
\sqrt{\md{E}_{\tilde{C}_{j,k}}\epsilon[\mc{E}(\tilde{C}_{j,k})]^2}
\end{align}
where the second inequality is due to the Cauchy-Schwarz inequality as in 
\cref{eq:CS_expectation}. Returning to \cref{eq:bad_bound}, we have
\begin{align}
\md{E}_{\vec{\tilde{C}}_k}\| \mc{E}(\vec{\tilde{C}}_k) - 
\mc{E}_k^{\mbf{T}} \|_\diamond 
&\leq \sum_{j=1}^n 2\epsilon(\mc{E}_{j,k}^{\mbf{T}}) + 
2\sqrt{\md{E}_{\tilde{C}_{j,k}}\epsilon[\mc{E}(\tilde{C}_{j,k})]^2} \notag\\
&\leq \sum_{j=1}^n 2\sqrt{6r(\mc{E}_{j,k}^{\mbf{T}})} + 
2\sqrt{\md{E}_{\tilde{C}_{j,k}}6r[\mc{E}(\tilde{C}_{j,k})]} \notag\\
&\leq \sum_{j=1}^n 4\sqrt{6r(\mc{E}_{j,k}^{\mbf{T}})}
\end{align}
for local noise, where the second inequality follows from 
\cref{eq:fidelity_to_worst} with $d=2$ and the third from the linearity of the 
infidelity.
\end{proof}
\textit{Acknowledgments}---The authors acknowledge helpful discussions with 
Arnaud Carignan-Dugas, David Cory, Steve Flammia, Daniel Gottesman, Tomas 
Jochym-O'Connor and Raymond Laflamme. This research was supported by the U.S. 
Army Research Office through grant W911NF-14-1-0103, CIFAR, the Government of 
Ontario, and the Government of Canada through NSERC and Industry Canada.

\end{document}